\newcommand{\esa}[2]{#2}{} %arXiv
\newcommand{\alloc}{\mathcal{A}}
\newcommand{\ctr}{\alpha}
\newcommand{\numitems}{m}
\newcommand{\numtotal}{n}
\newcommand{\MPP}{MPP_{CA}}
\newcommand{\VCG}{VCG_{CA}}
\newcommand{\eff}{\mathcal{E}}
\newcommand{\sumbids}[2]{\sum_{#1\in #2}b_{#1}}
\newcommand{\sumvals}[2]{\sum_{#1\in #2}v_{#1}}
\newcommand{\eg}{e.g., }
\newcommand{\junk}[1]{}
\newcommand{\XXX}[1]{}
\newcommand{\setmpp}{\mu}
\newcommand{\setpgsp}{\delta}
\newcommand{\settwompp}[1]{\mu_{#1}}
\newcommand{\setopt}{\sigma}
\newcommand{\settwoopt}[1]{\sigma_{#1}}
\newcommand{\anotb}[2]{#1 \backslash #2}
\newtheorem{example}{Example}[section]
\newtheorem{lemma}{Lemma}[section]
\newtheorem{theorem}{Theorem}[section]
\begin{document}

\title{Analyses of Cardinal Auctions}
\author{Mangesh Gupte\thanks{This work was done while the author was graduate student at Rutgers University.
}\\\
	Google Inc\\
	\href{mailto:mangesh@cs.rutgers.edu}{mangesh@cs.rutgers.edu}
% 2nd. author
\and
Darja Krushevskaja\\
	Rutgers University\\
	\href{mailto:darja@cs.rutgers.edu}{darja@cs.rutgers.edu}
% 2nd. author
\and	
S. Muthukrishnan\\
	Rutgers University\\
	\href{mailto:muthu@cs.rutgers.edu}{muthu@cs.rutgers.edu}
}

\maketitle

\begin{abstract}
\junk{
There are many use cases when multiple items are sold via the auction, \eg  advertisement slots on the web page, collectable goods, or user data. Buyer's valuation and preferences can be intricate: buyer wants to be exclusive, stay in top 3, or does not want to share content with too many competitors. Current understanding of how these preferences influence the performance of the auction are very limited, and typically  buyer is allowed to specify a single amount, which is interpreted as the maximum amount she is willing to pay if allotted the item. 

There are two natural auctions, one based on the classical Vickrey-Clarke-Groves  (VCG), and the other based on minimum pay property (MPP) which is similar to Generalized Second Price auction commonly used in sponsored search. While VCG has many theoretically desirable properties, \eg total {\em efficiency} or value to the buyers, is maximized and buyers bid truthfully revealing their values,  
in practice, not truthful MPP auctions are frequently preferred over VCG auctions because they are simple to understand and work with for buyers, and are expected to collect better {\em revenue} or total of payments to the seller.

We make a step towards studying more realistic auctions and  consider {\em cardinal auctions} in which bidders specify not only their bid or how much they are ready to pay for the item, but also a cardinal constraint on the number of items that will be sold via the auction.

We perform first known \emph{Price of Anarchy} type analyses for cardinal auctions with detailed comparison of VCG vs MPP for efficiency as well as revenue in equilibrium. Without cardinality constraints, MPP has the same efficiency  and at least as much revenue as VCG; this also holds for certain other generalizations of MPP (\eg prefix constrained auctions, as we show here). In contrast, our main results are that, with cardinality constraints, 
\begin{inparaenum}[(a)]
\item equilibrium efficiency of MPP is 1/2 of that of VCG 
and this factor is tight, and 
\item in equilibrium MPP may collect as little as 1/2 the revenue of VCG. 
\end{inparaenum}
These aspects arise because in presence of cardinality constraints, more strategies are available to bidders in MPP, including bidding above their value, this makes analysis nontrivial. 
}

We  study  {\em cardinal auctions} for selling multiple copies of a good,  in which bidders specify not only their bid or how much they are ready to pay for the good, but also a cardinality constraint on the number of copies that will be sold via the auction.
We perform first known \emph{Price of Anarchy} type analyses with detailed comparison of the classical
Vickrey-Clarke-Groves  (VCG) auction and one based on minimum pay property (MPP) which is similar to Generalized Second Price auction commonly used in sponsored search. 
Without cardinality constraints, MPP has the same efficiency  (total value to bidders) and at least as much revenue (total income to the auctioneer) as VCG; this also holds for certain other generalizations of MPP (\eg prefix constrained auctions, as we show here). In contrast, our main results are that, with cardinality constraints, 
\begin{inparaenum}[(a)]
\item equilibrium efficiency of MPP is 1/2 of that of VCG 
and this factor is tight, and 
\item in equilibrium MPP may collect as little as 1/2 the revenue of VCG. 
\end{inparaenum}
These aspects arise because in presence of cardinality constraints, more strategies are available to bidders in MPP, including bidding above their value, and this makes analyses nontrivial.

%\keywords { Auction Design, Efficiency, Revenue}
\end{abstract}

\section{Introduction}
\junk{
There are many use cases in which multiple copies of the item are sold simultaneously. We have good understanding of \emph{multi-unit auctions}, these have been studied since at least~\cite{shapley1971assignment}. There are two natural auctions, one based on the classical Vickrey-Clarke-Groves  (VCG), and the other based on minimum pay property (MPP) which is similar to Generalized Second Price auction commonly used in sponsored search. While VCG has many theoretically desirable properties,
 e.g., total {\em efficiency} or value to the buyers, is maximized and buyers bid truthfully revealing their values,  
in practice, not truthful MPP auctions are frequently preferred over VCG auctions because they are simple to understand and work with for buyers, and are expected to collect better {\em revenue} for the seller. As a rule, auctions accept one dimensional bids from buyers, interpreting it as a maximum payment the buyer is willing to make for an item. For instance, many popular search engines use \emph{Generalized Second Price} (GSP) auction~\cite{edelman2005internet,V,aggarwal2006truthful} to determine placement of advertisements (ads) on the page. 
In $GSP$ there are $n$ advertisers bidding for $m$ advertisement slots. Each slot $i$ has associated \emph{click through rate} (CTR) with it, or probability of being clicked, denoted by $\alpha_{i}\in (0,1)$. Slots are ordered in decreasing order of CTR's:  $\alpha_{i} > \alpha_{j}$ for $i<j$. Advertiser $i$ has private valuation $v_{i}$, which expresses the value of getting a click. 

To participate in the auction advertiser submits bid $b_{i}$ that indicates maximum payment she is willing to make. 
Auctioneer receives all bids, and assigns advertisers to slots in decreasing order of  their bids. For convenience, let us renumber advertisers in decreasing order of their bids, then, advertiser $i$ is assigned to slot $i$ with CTR $\alpha_{i}$. Payment of advertiser $i$ is $p_{i} = \frac{\alpha_{i+1}}{ \alpha_{i}}b_{i+1}$, and is charged only if the ad is clicked. Utility of advertiser $i$ assigned to slot $i$ is $u_{i} = \alpha_{i}v_{i} - p_{i}$. In~\cite{leme2010pure} authors perform analysis of $GSP$  for efficiency and find that it's efficiency is bounded by factor of $\frac{1+\sqrt{5}}{2}$ from the optimal.

Often value that buyers get from the auction if allotted the item is more complex and can depend on many factors, such as who are the other winners? how many other winners are there? etc. These directions are less studied, even fewer efficiency and revenue analyses are performed. For instance, in~\cite{ghosh2010expressive} authors study the case when buyers can express their value for being exclusive and present detailed analysis of efficiency and revenue. In \cite{aggarwal2007bidding, muthukrishnan2009bidding} authors model other possible value interdependencies, in the former buyers can report maximum position  to which they agree, in the later buyers express maximum number of winners. Both papers  propose MPP based mechanisms, and show that these mechanisms can produce good outcomes. However, they do not  present analysis of how auctions perform in terms of efficiency or revenue. 
 
In this work we study \emph{cardinal auctions} with identical goods as proposed in~\cite{muthukrishnan2009bidding}. Assume we have $\numtotal$ bidders and $\numitems$ identical items to sell via an auction, s.t. $\numtotal\ge \numitems$. What auction is suitable? In particular, there are three decisions to be made:
 \begin{inparaenum}[(\itshape i~\upshape)]
\item how many items to sell: $k^{*}$, 
\item how to allocate $k^{*}$ items: $a(.)$ and 
\item how to price each of them: $p(.)$. 
\end{inparaenum}
%For predefined or fixed $k^*$, there is a rich theory of auctions that determines $a$ and $p$. Here, we focus on how $k^*$ may be determined. 
We consider the case of {\em negative externality} when the number of bidders who win and are allotted the item affects the value of the item to each of the winners. In particular, each bidder is not interested in a copy if the number of copies eventually allocated exceeds her threshold. Cardinal auctions explicitly incorporate this externality into the bidding language. In this paper we study efficiency and revenue tradeoffs for such auctions.
}

Assume we have $\numtotal$ bidders and $\numitems$ identical items to sell via an auction, s.t. $\numtotal\ge \numitems$. What auction is suitable? In particular, there are three decisions to be made:
 \begin{itemize}
\item how many items to sell: $k^{*}$, 
\item how to allocate $k^{*}$ items: $a(.)$ and 
\item how to price each of them: $p(.)$. 
\end{itemize}
%For predefined or fixed $k^*$, there is a rich theory of auctions that determines $a$ and $p$. Here, we focus on how $k^*$ may be determined. 
We consider the case of {\em negative externality} when the number of bidders who win and are allotted the item affects the value of the item to each of the winners. In particular, each bidder is not interested in a copy if the number of copies eventually allocated exceeds her threshold. {\em Cardinal auctions} explicitly incorporate this externality into the bidding language. In this paper we study efficiency and revenue tradeoffs for such auctions.

\section{Model}

In cardinal auctions, there are $n$ buyers competing for at most $m \le n$ identical copies of an
item in the auction. Each buyer wants to buy exactly one copy and has two
{\em private} numbers $v_i$ and $k_i$. Auctioneer has no prior information about values of buyers.  The utility $u_{i}(v_{i}, k_{i})$ that bidder $i$ derives from the auction is 
\begin{eqnarray*}
u_{i}(v_{i}, k_{i}) = \left\{
\begin{array}{l l}
     x_{i}v_i-p_{i}& \quad \text{if number of copies sold is less than $k_{i}$}\\
    -\infty& \quad \text{otherwise}\\
  \end{array} \right.
\end{eqnarray*}
where $x_{i} \in \{0,1\}$ is indicator variable that shows whether $i$ was allotted a copy or not, and $p_i$ is the price at which $i$ obtains it. 

Buyers express their preferences through \emph{2 dimensional bid} $(b_{i}, l_{i})$ where $b_{i}$ is the maximum amount buyer $i$ is willing to
pay if at most $l_{i}$ copies are allocated. Note, that $b_{i}$ may differ from $v_{i}$, and $l_{i}$ from $k_{i}$. 
Once the auctioneer gathers all the
bids she has to decide on optimal number of copies $k^*$,  allocation of $k^*$ copies according to function $a(.)$ and payments according to pricing function $p(.)$.  In mechanisms we will consider,  no bidder $i$  will be a winner if $l_i < k^*$. 

\junk{
We consider mechanism that satisfy the following properties: 
\begin{itemize}
\item \emph{observe cardinality preferences:} mechanism never assigns a bidder into allocation of size greater than $l_{i}$.
\item \emph{individual rationality: } an agent's utility from participating in the auction is non-negative. That implies that mechanism never chargers more than bid: $b_{i} \ge p_{i}$.
\end{itemize}
}

\medskip
\noindent
{\bf Motivating Scenarios.}
An important motivation arises in auctions for online advertisements (ads). 
%in Internet (sponsored search and display ads). 
Consider {\em display ads}, or visual ads, on a webpage. Advertisers whose ad is shown on the page compete for attention of the viewers.
% which depends on many factors such as the nature and format of the content of the page and those of the other ads on the page (including their placement, color, dynamic features, etc). 
%To the first order approximation, 
Clearly, the {\em number} of ads shown is an important feature, e.g., 
% on the ability of an advertiser to get the attention of the user, e.g.
publishers recognize that showing fewer ads helps\footnote{\url{http://www.technologyreview.com/web/25827/?a=f}}.
Currently, this cardinality is largely determined by the publisher of the web page, who may choose to make it exclusive showing only one ad, but in many cases mixes several. They choose the  number of ads on a page based on variety of techniques from machine learning to user studies, 
esthetics of UI design and revenue maximization. This approach does not let advertisers influence how many ads appear with their own; hence, they bid depending on the average of their values over the possible number of ads that might appear on that page. This induces inefficiencies and potential revenue loss.  Cardinal auctions are an alternative. They let advertisers explicitly  specify how many other advertisers may appear with their ad on a given page. 

%elicit the influence of externalities explicitly by letting advertisers explicitly specify 
Cardinal auctions are also suitable in a variety of other instances:  
 \begin{itemize}
 \item Say we can produce a collectors item such as a signed copy of an album or a book. The more exclusive the copy is, the more valuable it is to the possessor.  How many copies shall we produce? While traditionally this is determined by estimating the demand function, one can imagine an auction-based method, where bidders can specify in some way the value of the item to them as a function of how many copies are made and sold. 
 \item
Consider a situation that arises in a data exchange such as BlueKai\footnote{\url{http://www.bluekai.com/}} where certain pertinent data about a user is sold for ads targeting. The data may be sold to any number of advertisers for targeting, but in some cases, the more the information is shared, the less value it gives to the advertisers. Hence, when data is sold via auction, advertisers may wish to be able to influence how many of others get access to the data.  
\end{itemize}

%These and other potential scenarios convince us that auctions amidst  cardinal externalities dependent on the number of %winners is an important and interesting case for study. 

\subsection{Auctions}
Allocation $\alloc$ is the set of $k^*$ winners who obtain a copy. 
We consider set of \emph{feasible} allocations: allocation $\alloc$ is {\em feasible} if $\{l_{i} \ge  |\alloc|,  k_i \geq |\alloc| : \forall i \in \alloc \}$. The total {\em efficiency}  $\eff_{\alloc}$ of allocation $\alloc$  is the sum of values of allotted bidders or $\sum_{i\in \alloc} v_i$. 

There are two natural auctions to consider. 

\vspace{3mm}
\noindent 
$\mathbf{\VCG}$. $\VCG$ is a straightforward extension of the standard
Vickrey-Clarke-Groves (VCG) auction~\cite{V,C,G}. 
% In this auction bidders submit their
%two-dimensional bids $(b_{i}, l_{i})$.  
$\VCG$ is  \emph{truthful}, i.e., bidder's bid their true valuations. This is a well-known property of VCG
mechanism. Thus we can assume that bidders submit their  bids as $(v_{i}, k_{i})$. $\VCG$ chooses the feasible
allocation that maximizes the total sum of values:  $\eff^{*} = \max_{\alloc}\sum_{i\in\alloc}v_{i}$.
%Calculation of prices in $\VCG$ is similar to standard VCG. 
Let $\eff_{-i}$ be highest efficiency achievable without bidder $i$, then the
price for bidder $i$ is 
\begin{eqnarray}
\label{eq:vcgpayment}
p^{VCG}_{i} = \eff_{-i} -\eff^{*}+ v_{i}
\end{eqnarray}

%\medskip
%\textbf{Comments on $\mathbf{\VCG}$.}
% and other potential auctions.}
$\VCG$, as  generally known, can have low revenue. Furthermore, with cardinality constraints its outcomes are not \emph{envy-free}: losing bidder would agree to purchase a copy for the price higher than what is being asked from winners. For more intuition, consider the following example:

\begin{example}
There are 3 bidders with true valuations: $A=(100, 1)$, $B = (90,2)$ and $C=(80,2)$. For this setting $\VCG$ will identify winning allocation with bidders $B$ and $C$ in it since the total efficiency is $90+80 = 170>100$). It will charge $B$ amount $p_{B}= 100 - 170 + 90 = 20$ and $C$ amount $p_{C} = 100 - 170 + 80 = 10$. Thus total collected payment is 30. However, bidder $A$ will envy this low payment of 30. 
\hfill $\Box$
\end{example}

\medskip
\noindent
$\mathbf{\MPP}$.  $\MPP$ was introduced in~\cite{muthukrishnan2009bidding}  and is based on the \emph{minimum pay property} for the outcome: auction requires every buyer to pay no more than what she would have bid, if
she knew all other bids, to get the exact same assignment she got. 
To calculate prices, let winning allocation be $\alloc^{*}$ and $\alloc_{2}$ the allocation that gives second highest sum of bids after $\alloc^{*}$.  Let winning bids in $\alloc^*$
 be sorted top-down in decreasing order of bids, $b_1 \geq b_2 \geq \dots$.
 %for each $i\in \alloc^{*}$:
 The $i$th winner pays price
 \begin{eqnarray*}
 p^{MPP}_{i} = \max \{\sumbids{j}{\alloc_{2}} - \sumbids{j}{\alloc^{*}} + b_{i} , b_{i+1} \}
 \end{eqnarray*}
%where $\alloc_{2}$ is allocation that gives second highest sum of bids after $\alloc^{*}$. 
 The price consists of two components. The first term is the minimum amount $i$ needs to bid to ensure that the allocation $\alloc^*$ is the winner, and the second term is the minimum bid to get above the $i+1$'st largest bid. The overall price is the maximum over both. 
 
 MPP auction is inspired by  \emph{Generalized Second Price} (GSP) auction used by 
 many popular search engines~\cite{edelman2005internet,V,aggarwal2006truthful} to determine placement of advertisements (ads) on the page. 
In $GSP$ there are $n$ advertisers bidding for $m$ advertisement slots. Each slot $i$ has associated \emph{click through rate} (CTR) with it, or probability of being clicked, denoted by $\alpha_{i}\in (0,1)$. Slots are ordered in decreasing order of CTR's:  $\alpha_{i} > \alpha_{j}$ for $i<j$. Advertiser $i$ has private valuation $v_{i}$, which expresses the value of getting a click. 
To participate in the auction advertiser submits bid $b_{i}$ that indicates maximum payment she is willing to make. 
Auctioneer receives all bids, and assigns advertisers to slots in decreasing order of  their bids. For convenience, let us renumber advertisers in decreasing order of their bids, then, advertiser $i$ is assigned to slot $i$ with CTR $\alpha_{i}$. Payment of advertiser $i$ is $p_{i} = \frac{\alpha_{i+1}}{ \alpha_{i}}b_{i+1}$, and is charged only if the ad is clicked. 
 
%\textbf{Motivation for $\mathbf{\MPP}$}
 $\MPP$ naturally generalizes $GSP$ auction and in absence of cardinality constraints is the special case of $GSP$ without click through rates, or when $\alpha_{i}=1$ for all $i$.   
%Efficient algorithms for $\MPP$ and $\VCG$ in presence of $l_i$ constraints can be found in~\cite{muthukrishnan2009bidding}.
 %The motivation and detailed explanation why given price is MPP can be found in~\cite{muthukrishnan2009bidding}.

\subsubsection{Analysis of Auctions}
Unlike $\VCG$, $\MPP$ is not truthful. While bidder cannot benefit from misreporting her cardinality preference\footnote{we will show this in a separate argument later on}, she can improve her utility by reporting $b_{i} \ne v_{i}$. Consider the following example: 

\begin{example}
Consider 3 bidders with their true valuations: $A =(100,1)$, $B = (80,2)$ and $C=(70,2)$. Auctioneer runs $\MPP$. If auctioneer receives truthful bids, then she will choose allocation of 2 bidders: $(B, C)$, and charge them 70 and 20, respectively. Utility of bidder $B$ is $u_{B} = 80 - 70 = 10$. However, bidder $B$ can improve it by lowering her bid to $(40, 2)$. Then, allocation is the same (bidders $B$ and $C$), but payments are different: payment of $B$ is 30, and payment of $C$ is  60. Now, utility of $B$ is $u'_{B} = 80 - 30 = 50>10$. Hence, bidder $B$ benefits from submitting untruthful bid $b_{B} < v_{B}$.
\hfill $\Box$
\end{example}

$\MPP$ can have many outcomes,  
%we are interested in analyzing \emph{stable} ones, or outcomes in which a bidder cannot improve her utility by changing her bid given that bids of others will stay unchanged. Formally, 
we consider only bid vectors that are in \emph{Nash equilibria}, that is, for every bidder $i \in \alloc$ the following inequalities hold:
\begin{eqnarray*}
 v_{i} - p_{i, \alloc} \ge v_{i} - p_{j, \alloc'} & & \forall \alloc' \in \mathcal{F}, \alloc \ne \alloc'\\
v_{i} - p_{i,\alloc} \ge v_{i} - p_{j, \alloc} & & \forall j\ne i
\end{eqnarray*}
There is a set of Nash equilibria efficiencies of  $\Sigma_1,\Sigma_2, \dots$. 
Define $\eff_{min}=\min_{i} \Sigma_i$ and $\eff_{max}=\max_i \Sigma_i$.  Then, \emph{price of anarchy} is defined as 
$PoA(.) = \frac{\eff_{max}}{\eff_{min}}$. Observe, that $\eff_{max} = \eff^{*}$. Thus, $PoA(\VCG) = 1$, furthermore in order to evaluate performance of $\MPP$, one can use $\VCG$ as a benchmark, and compare efficiency of $\VCG$ outcome with that of the worst outcome of $\MPP$: 
$$PoA(\MPP) = \frac{\eff^{*}}{\eff_{min}}.$$
 
To analyze revenue we use $\VCG$ as the benchmark for consistency, and compare it to the revenue collected by $\MPP$.

Bidders are strategic and their goal is to maximize their utility. Their behavior, or \emph{strategy}, is determined by mechanism. Strategy $s$ is \emph{weakly dominant} if regardless of what other bidders do, strategy $s$ gets a player utility that is at least as high as utility obtained by playing any other strategy.  Strategy $s$ is \emph{(strictly) dominant} if utility of playing strategy $s$ is strictly larger than playing any other strategy, regardless what other bidders do.  We consider two types of bidders:
%strategies:  
\begin{itemize}
\item \emph{Conservative bidders} do not bid over their value, i.e. $b_{i} \le v_{i}$, hence they do not risk paying more than their true valuation and getting negative utility.
\item \emph{Rational bidders} can bid above their true valuation $v_{i}$ in equilibria iff the payment $p_{i}$ does not exceed $v_{i}$. Equilibria that contain such bids are fragile, because bidder can get negative utility if some other bidder changes her bid. Such equilibria help us explore the properties of possible outcomes. 
\end{itemize}

\subsection{Our Results} 
%We start with formulation of \emph{cardinal auctions}. 
We perform first known analyses of cardinal auctions, and compare efficiency and revenue of  $\VCG$ vs $\MPP$ in equilibrium. 
%We make structural observations about these auctions that help us prove: 
\junk{
\begin{description}
\item[Efficiency] We were able to give tight bounds for both conservative and rational bidders. For conservative bidders we show price of anarchy of $\MPP$ is 1. However, hoe rational bidders it is 2.
\item[Revenue] For conservative bidders, we show that revenue of $\MPP$ is
  always larger than revenue of $\VCG$.  For rational bidders, we show that
  revenue of $\MPP$ may be even $1/2$ that of $\VCG$.
\end{description} 

Many more variations of cardinal auctions are of interest, we discuss these in
the concluding remarks for future work. In this paper we briefly discuss following modification of the model: bidders want to have handle on how early they will adopt the product: e.g. bidder $i$ is interested in obtaining the good only if she is the first receiver. The model is a special case the model discussed in~\cite{aggarwal2007bidding}, in our modification all items are identical.~\cite{aggarwal2007bidding} discusses model and shows that there exist an equilibria that has allocation and pricing identical to $VCG$. 
}

\begin{itemize}
\item {\em (Efficiency)} 
%Our main results are for efficiency. 
We show that PoA is $1$ for conservative bidders. For rational bidders, 
we show that PoA is $2$ and this  is optimal. 

\noindent
As noted before, without cardinal constraints $l_i$, $\MPP$ becomes GSP without click through rates. In that case, it is a dominant strategy for bidders to be conservative, and  PoA is $1$. Further, we show that, even with a slightly different bidding language of {\em prefix constraints} (defined later precisely), it is still (weakly) dominant strategy for bidders to be conservative, and PoA of prefix auctions is still $1$. It is interesting that with cardinality constraints, there is provable loss of efficiency in equilibrium. 

\item {\em (Revenue)}
We show that for conservative bidders revenue of $\MPP$ is always at least that of $\VCG$, and for rational bidders, revenue of $\MPP$ may be only $1/2$ of that of $\VCG$. 
%This further provides evidence that for cardinal auctions, whether rational bidders are conservative or not, makes a difference in the equilibrium. 

\noindent
In contrast, without cardinality constraints, $\MPP$ has larger revenue than $\VCG$, no matter the nature of bidders. 
\end{itemize}

In both analyses of efficiency and revenue, the central technical challenge is that $\MPP$ pricing has two components, first ensures that the eventual allocation is ``better'' than others with fewer, or more items, and the other is the impact of the bidder in the chosen allocation. The role of position component of price of $\MPP$ has been studied extensively in analyses of GSP~\cite{leme2010pure,lucier2011gsp}.  
%In ~\cite{leme2010pure} authors show that it is weakly dominant strategy for bidders to bid conservatively and  PoA of $\frac{1+\sqrt{5}}{2}$ for Nash equilibria of GSP with conservative bidders
 The cardinality component of the pricing of $\MPP$ allows bidders to bid over their true valuation, this  induces the nontrivial PoA properties, complicates the analyses, and is novel. 
\subsection{Related Work}
The work closest to ours is~\cite{ghosh2010expressive,jerath2011exclusive,muthukrishnan2009bidding}.
%,aggarwal2007bidding}. 
In~\cite{muthukrishnan2009bidding} authors introduce auction with cardinal externality, formulate and motivate pricing, and give efficient algorithms for calculating the allocation and pricing.  However, they do not perform PoA type analysis we do here. 

\cite{ghosh2010expressive,jerath2011exclusive} introduce auction with negative externality: valuation of bidder $i$ is a pair $(v_{i}^{E}, v_{i}^{M})$, where $v_{i}^{E}$ is value of the bidder for being {\em exclusively}  allocated, and $v_{i}^{M}$ is valuation if bidder $i$ for being allocated among with other bidders.  The auction determines allocation type (exclusive vs non exclusive), set of bidders allotted, and prices at which they get the item. The authors consider various pricing schemes including variations of VCG and MPP and provide analysis of efficiency and revenue in equilibrium. Our work differs from~\cite{ghosh2010expressive,jerath2011exclusive} in the bidding language. The bidding language 
in~\cite{ghosh2010expressive,jerath2011exclusive} can not specify preference for the number of copies being sold, while our bidding language does not allow to specify more than a single bid. Thus, the bidding languages are incomparable. We believe both  languages are natural and interesting. 
 
Another relevant auction was presented in~\cite{aggarwal2007bidding}, where authors consider the auction for the ordered set of items, in which bidders specify the largest prefix of allocation in which they want to participate: valuation of the bidder is $(v_{i}, k_{i})$, where bidder $i$ has valuation $v_{i}$ only if she is in top $k_{i}$ allotted bidders. In their model bidders have no influence on valuations of each-other or the size of final allocation. Authors present two auctions: one based on VCG and the other based on MPP. The paper shows that MPP based auction can achieve efficiency of VCG, however do not present PoA type results. In this paper we present analysis of efficiency and revenue for this model and contrast it with our main results of efficiency and revenue for cardinal auctions.

%Overall, despite the importance of understanding of properties of different auctio

In general, there are limited PoA style analyses of auctions. 
Closely related to cardinal auctions are sponsored search auctions where each item has an associated click through rate.
A recent paper performs PoA style analysis in this setting: 
in ~\cite{leme2010pure} authors show that it is weakly dominant strategy for bidders to bid conservatively and  PoA of $\frac{1+\sqrt{5}}{2}$ for Nash equilibria of GSP with conservative bidders. 

There is extensive body of literature on \emph{multi-unit auctions}: auctioneer has $m$ items to sell among $n$ buyers. The problem was studied at least since~\cite{shapley1971assignment}, however much of that work does not accommodate externalities, and considers the number of items to sell $k$ as an input to the model.
A more recent phenomena is  auctions with unlimited supply, or  digital good 
auctions~\cite{goldberg2003envy,goldberg2006comp}. Here, auctioneer has unlimited supply of the good to sell, she needs to find optimal price $p^{*}$ that determines both allocation and uniform price. However  buyer valuation $v_{i}$'s  are independent of the outcome of the allocation, unlike what we study here.

Externalities have been studied extensively in Economics and Computer Science. 
~\cite{haghpanah2011optimal} considers model where buyers experience  positive externalities once sufficient number of their friends are also allocated the copy of the item. In~\cite{jehiel1996not} authors consider situation when winning buyer subjects other participants to negative externality.
%~\cite{deng2011money} shows that in case of monopolist seller and negative externalities that buyers experience, that there %exist a no-allocation equilibrium that is profitable to the seller.
~\cite{aseff2008optimal} studies the problem of allocating a pair of goods among group of sellers who have pairwise externalities to each other. 

%Authors formulate LP and construct optimal auction for the problem. 
~\cite{pei2011working} considers auction for sharable goods. In their setting,  valuation of the bidder is multidimensional and represents value of strict ownership, sharing value, and that of no allocation. They investigate unique perfect Bayesian equilibrium. All of these works 
%Conceptual difference between works mentioned above that they 
consider externality as an input to the problem or model value of a bidder as some function of number of items allocated. All bidders are treated equally and no bidder has control over allocation in which they participate. This is different from
cardinal auctions we study here where the bidding language explicitly allows bidders to specify the cardinal constraints. 
%In our case, we solve our externality at a level of \emph{bidding language}.

At the highest level, it will be of interest  to study auctions for the model in which buyer can explicitly define her valuation for each possible allocation size and the set of different identities of co-winners. However, such exponential input can not be processed or analyzed efficiently, and even in restricted cases, at least as hard as various multidimensional auctions.

%We, as authors of~\cite{ghosh2010expressive} suggest and analyze in equilibria different mechanisms. Notice, that in cardinal %auctions we consider all items to be identical. To allow such analysis one needs to make reasonable assumptions regarding %slot \emph{and} allocation specific click-through-rates $\alpha_{ij}$, e.g. what is better slot 2 in 2-slot allocation, or slot 1 in 3-s%lot allocation? 

In practical terms,
within the context of advertising, there were machine learning approaches to estimating the optimal number of ad slots on the page. In~\cite{schroedl2010generalized} authors consider several utility functions that incorporate user experience in order learn number of slots that optimizes utility unction.~\cite{broder2008swing} formulates a binary problem: given a set of relevant ads the goal is to decide whether system benefits from showing ads or should it not show the ads and by that benefit in a long run.  These are interesting directions, different from our approach which relies on the bidders to address the issue (at least in the short run).

\section{Preliminary Observations}

Our first observation is regarding truthfulness of the cardinal constraint. 
\begin{lemma}
\label{lemma:truth}
In $\MPP$,  bidders truthfully reveal their private $k_i$'s, that is $l_i=k_i$ for all $i$ in Nash equilibria of $\MPP$ auction.
\end{lemma}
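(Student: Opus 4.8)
The plan is to prove the lemma by showing that, against any fixed profile of the other bidders' reports, declaring the true threshold $l_i = k_i$ is a (weakly) best response for bidder $i$; since every bidder best‑responds in a Nash equilibrium, this lets us take $l_i = k_i$ for all $i$ in any equilibrium (and, under the usual convention that a bidder indifferent among reports declares her true threshold, forces equality). As a preliminary step I would record that every bidder's equilibrium utility is non‑negative: by deviating to the report $(0,k_i)$, bidder $i$ either loses or is a winner at price $0$ (the cardinality term $\sumbids{j}{\alloc_{2}} - \sumbids{j}{\alloc^{*}} + 0$ is then non‑positive and, as the lowest bidder, $i$ has no successor among the winners), and is never placed in an allocation larger than $k_i$; hence this deviation guarantees utility $\ge 0$. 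Consequently, in any equilibrium no bidder is assigned into an allocation exceeding her true threshold, so whenever $i$ is a winner the winning allocation $\alloc^{*}$ satisfies $|\alloc^{*}| \le k_i$.

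I would then rule out the two kinds of untruthful report. Suppose first $l_i > k_i$. By the preliminary $i$ is not over her threshold, so $i$ either loses or wins in some $\alloc^{*}$ with $|\alloc^{*}| \le k_i$. Compare this with the report $(b_i, k_i)$: the only feasible allocations that disappear are those that contain $i$ and have size strictly larger than $k_i$, and each has bid‑sum at most $\sumbids{j}{\alloc^{*}}$ since $\alloc^{*}$ won; hence $\alloc^{*}$ is still the winner, while the runner‑up allocation $\alloc_{2}$ can only lose bid‑sum. By the $\MPP$ pricing formula the cardinality term $\sumbids{j}{\alloc_{2}} - \sumbids{j}{\alloc^{*}} + b_i$ then weakly decreases, and the position term $b_{i+1}$ is unchanged because $\alloc^{*}$ and its order are unchanged; so $p^{MPP}_{i}$ weakly decreases and $(b_i,k_i)$ is at least as good as $(b_i,l_i)$. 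Thus no best response needs $l_i > k_i$.

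The remaining and principal case is $l_i < k_i$, where reporting the true $k_i$ only enlarges the feasible set and every newly‑feasible allocation contains $i$. If $i$ loses under $(b_i,l_i)$ her utility is $0$ and already matched by the safe deviation, so assume $i$ wins in some $\alloc^{*}$ with $|\alloc^{*}| \le l_i < k_i$. Here I would show that reporting $k_i$ (after possibly lowering $b_i$) keeps $i$ a winner and does not raise her price. The delicate point is that one of the new allocations could become the runner‑up $\alloc_{2}$ and a priori inflate the cardinality term of $p^{MPP}_{i}$; the way out is that any such allocation contains $i$, so when it plays the role of $\alloc_{2}$ the quantity $\sumbids{j}{\alloc_{2}} - \sumbids{j}{\alloc^{*}} + b_i$ collapses to a comparison of two allocations that both contain $i$ (so $b_i$ cancels), and, since $\alloc^{*}$ outbids $\alloc_{2}$, this quantity is at most $0$ — leaving $i$'s price equal to the position term, which does not see the new allocations at all. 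Turning this into a clean case split (either $\alloc^{*}$ survives as the winner and the above applies, or the winner moves to a larger allocation containing $i$, whose price is controlled by the same phenomenon), and verifying that the position term itself cannot jump up, is where the real work lies; I expect this step to be the main obstacle. Combining the three cases shows $l_i = k_i$ is a weakly best response for every bidder, which is exactly the claim.
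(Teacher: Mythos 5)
Your overall strategy is the same as the paper's: show that, against any fixed profile of the other reports, declaring the true $k_i$ in the cardinality coordinate is a weakly best response, splitting into $l_i > k_i$ and $l_i < k_i$. Your preliminary individual-rationality step (the safe deviation to $(0,k_i)$) and your treatment of $l_i > k_i$ are correct, and in fact more careful than the paper's one-line ``identical utility'' claim, because you actually track what shrinking the feasible set does to the runner-up allocation and hence to the $\MPP$ price.

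There is, however, a genuine gap, and it is the one you flag yourself: the sub-case $l_i < k_i$ with $i$ winning is not completed, and your intended fix does not close it as stated. If the newly feasible allocation $\alloc_2$ contains $i$, then with both $\alloc_2$ and $\alloc^*$ containing $i$ the cardinality term is
$\sumbids{j}{\alloc_{2}} - \sumbids{j}{\alloc^{*}} + b_i \le b_i$, not $\le 0$: the $b_i$ does \emph{not} cancel under the price formula as literally written, since $\alloc_2$ is defined simply as the second-highest-sum allocation with no special-casing of whether $i\in\alloc_2$. Your cancellation only works under the ``minimum amount $i$ needs to bid to keep $\alloc^*$ winning'' semantics, where allocations containing $i$ impose no constraint on $i$'s price; that reading is consistent with the paper's prose but not with its displayed formula, so you would need to commit to it explicitly. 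You also leave unresolved the branch where the winner itself migrates to a larger allocation containing $i$, and the verification that the position term $b_{i+1}$ cannot increase. To be fair, the paper's own proof of this case is no more complete --- it asserts ``identical utility'' when $k^* \le l_i$ and appeals to ``positive probability'' otherwise, without examining the price at all --- so you have correctly located the soft spot in the argument; you just have not filled it.
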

\begin{proof}
Consider any bidder $i$ who participates in winning allocation of size $k^{*}$. For contradiction let their revealed $l_i > k_i$. 
There are two possibilities: 
\begin{inparaenum}[(1)]
\item $k^{*} \leq k_i$, then bidder $i$ would have identical utility by reporting 
$k_i$ instead of $l_i$;
\item $k^{*} > k_i$, bidder now has utility
$-\infty$ which is worse than utility from reporting $k_i$ truthfully.
\end{inparaenum}
 Hence, bidder does not have incentive to submit $l_i > k_i$.

Consider same bidder $i$, and say for contradiction that their revealed $l_i < k_i$. 
Again, there are two cases:
\begin{inparaenum}[(1)]
\item $k^{*} \leq l_i$, then bidder $i$ would have identical utility by reporting 
$k_i$ instead of $l_i$;  
\item with some positive probability  $k_{i} \ge k^{*}> l_i$, bidder now has utility
$0$ which is worse than utility from reporting $k_i$ truthfully. 
\end{inparaenum}
Hence, bidder does not have incentive to submit $l_i < k_i$.
	\esa{	
	\hfill $\Box$
 	}{}
\end{proof}

Our second observation concerns bidding behavior of \emph{losing bidders}. 
There exist bids which are in equilibrium, but the efficiency is bounded away from the maximum achievable efficiency by an arbitrarily large factor. Consider the case of three bidders $(100,1), (75,2), (75,2)$, who make the following bids: $(100,1)$, $(1,2)$, $(1,2)$. This set of bids forms Nash equilibrium since no bidder by herself has an incentive to change her bid. However, the efficiency of the resulting allocation is $100$, compared to the optimum efficiency of $150$. This is due that losing bidders can arbitrarily shade their bids.  This is a common problem, previously faced in~\cite{ghosh2010expressive}. Like~\cite{ghosh2010expressive}, we will henceforth assume that losing bidders bid their true valuation. 

\section{Efficiency Analyses}

\begin{theorem}
 % In CA with conservative bidders if losers bid at least their true valuation, the
  %the optimal allocations for $\VCG$ and $
With conservative bidders, $\MPP$'s allocation has the same total value  as $\VCG$ in Nash equilibrium. 
  %contain the same set of bidders.
\label{kstarlemma}
\end{theorem}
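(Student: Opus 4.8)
\medskip
\noindent\textbf{Proof plan.} Let $\alloc$ be the winning allocation in the given Nash equilibrium of $\MPP$, with $|\alloc| = k^{*}$, and let $\sigma$ be an efficiency-maximizing feasible allocation, so that $\sum_{i\in\sigma} v_{i} = \eff^{*}$; the goal is to show $\sum_{i\in\alloc} v_{i} = \eff^{*}$ by proving both inequalities. For ``$\le$'', I would first observe that $\alloc$ is itself a feasible allocation: by Lemma~\ref{lemma:truth} each winner reports $l_{i} = k_{i}$, and since $\MPP$ never makes $i$ a winner when $l_{i} < k^{*}$, every $i\in\alloc$ has $k_{i} = l_{i} \ge k^{*} = |\alloc|$. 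Hence $\eff_{\alloc} = \sum_{i\in\alloc} v_{i} \le \eff^{*}$.

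For the reverse inequality I would rely only on the defining feature of $\MPP$, namely that its outcome is the feasible allocation of maximum total bid. The first step is to check that $\sigma$ is an admissible candidate under the equilibrium bid profile: $\sigma$ is feasible for $\VCG$ (i.e.\ $k_{i} \ge |\sigma|$ for all $i\in\sigma$), and by Lemma~\ref{lemma:truth} together with the standing assumption that losing bidders bid their true valuations, every reported $l_{i}$ equals $k_{i}$, so each $i\in\sigma$ has $l_{i} = k_{i} \ge |\sigma|$. Therefore $\sum_{i\in\alloc} b_{i} \ge \sum_{i\in\sigma} b_{i}$. Next, partition the relevant indices as $S = \alloc\cap\sigma$, $A = \alloc\setminus\sigma$, $B = \sigma\setminus\alloc$ and cancel the common $S$-terms to obtain $\sum_{i\in A} b_{i} \ge \sum_{i\in B} b_{i}$. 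Finally I would feed in the two hypotheses on bids: conservativeness gives $b_{i} \le v_{i}$ for every $i\in A\subseteq\alloc$, while each $i\in B$ is a loser and hence bids $b_{i} = v_{i}$; chaining these, $\sum_{i\in A} v_{i} \ge \sum_{i\in A} b_{i} \ge \sum_{i\in B} b_{i} = \sum_{i\in B} v_{i}$, and adding $\sum_{i\in S} v_{i}$ to both sides yields $\sum_{i\in\alloc} v_{i} \ge \sum_{i\in\sigma} v_{i} = \eff^{*}$. Combined with the first part, this forces equality.

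I expect the main thing requiring care to be the bookkeeping in the second paragraph --- reconciling ``feasible for $\VCG$'', which is stated in terms of the true types $(v_{i}, k_{i})$, with ``valid candidate for $\MPP$'', which is stated in terms of the reported bids $(b_{i}, l_{i})$; this is precisely where Lemma~\ref{lemma:truth} and the losing-bidders-are-truthful assumption are both used. Everything else is the short chain of inequalities above. It is worth noting that no winner-deviation argument is needed, that conservativeness is used only for the winners in $A$, and that the argument genuinely breaks for rational bidders, since then a winner may have $b_{i} > v_{i}$ and the step $\sum_{i\in A} b_{i} \le \sum_{i\in A} v_{i}$ fails.
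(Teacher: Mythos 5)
Your proposal is correct and follows essentially the same route as the paper: compare the total bid of the $\MPP$ equilibrium allocation against that of the efficiency-maximizing allocation, cancel the intersection, and apply conservativeness ($b_i \le v_i$) to the winners and the losers-bid-truthfully assumption to the other side. The only difference is presentational --- you spell out the ``$\le$'' direction (feasibility of the $\MPP$ allocation with respect to true $k_i$'s) and the feasibility of $\sigma$ as an $\MPP$ candidate explicitly, whereas the paper leaves these implicit in its final sentence.
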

\begin{proof}
Let $\setopt=\settwoopt{VCG}$ be the set of winning bidders that maximizes efficiency and $\setmpp$ be the set of winning bidders under $\MPP$ in Nash equilibria. 
%Since $\VCG$ is truthful and maximizes efficiency, $\settwoopt{VCG} = \setopt$.
%Assume for contradiction that $\setmpp \ne \setopt$. 
Let $\{b_i | i \in \setmpp\}$ be a set of equilibrium bids under $\MPP$. 
Since $\MPP$  chooses the set of bidders who maximizes total sum of bids, then it must be true that $
\sumbids{j}{\setmpp} \geq  \sumbids{j}{\alloc}
$.
%Observe, that 
Since $\setopt$ is feasible,  $\sumbids{j}{\setmpp}  \geq   \sumbids{j}{\setopt}$.  Hence
  \begin{align}
    & &\sum_{i \in \anotb{\setmpp}{\setopt}} b_i + \sum_{i \in \setmpp \cap \setopt} b_i
    & \geq   \sum_{j \in \anotb{\sigma}{\setmpp}} b_j + \sum_{j \in \setmpp \cap \setopt} b_j \nonumber \\
    &\implies& \sumbids{i}{\anotb{\setmpp}{\setopt}} & \geq  \sumbids{j}{\anotb{\setopt}{\setmpp}}  \nonumber \\
    &\implies& \sumvals{i}{\anotb{\setmpp}{\setopt}} & \geq \sumvals{j}{\anotb{\setopt}{\setmpp}} \label{eq:assumption}\\
    &\implies& \sumvals{i}{\setmpp}  & \geq  \sumvals{j}{\setopt} \nonumber 
  \end{align}
In (\ref{eq:assumption}) we use the assumption on the right hand side that losers bid at least their true valuations and on the left hand side that bidders are conservative.  Then the only possibility is that total value of $\setmpp$ equals that of  $\setopt$.
\esa{
\hfill $\Box$
}{}
%The last line contains contradiction, since $\sigma$ is the set that optimizes efficiency. Hence, $\setmpp = \setopt$.
\end{proof}

It follows that the PoA of $\MPP$ is $1$ for conservative bidders. 

\begin{theorem}
For rational bidders, PoA of $\MPP$ is $2$ and this is tight.
\end{theorem}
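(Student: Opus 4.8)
The statement has two halves --- $PoA(\MPP)\le 2$ for every Nash equilibrium, and an instance showing the constant cannot be lowered --- and I will handle them separately.

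\textbf{Upper bound.} Fix a Nash equilibrium whose $\MPP$ allocation is $\setmpp=\{1,\dots,k^\ast\}$, winners indexed by decreasing bid $b_1\ge\dots\ge b_{k^\ast}$ with true values $v_1,\dots,v_{k^\ast}$, and let $\setopt$ be the efficient ($=\VCG$) allocation, so $\eff^\ast=\sumvals{i}{\setopt}$ and the goal is $\eff^\ast\le 2\,\eff_{\setmpp}$. The plan is to lean on four facts: (i) by Lemma~\ref{lemma:truth} and the convention that losers report truthfully, $\setopt$ is a feasible candidate, so bid-sum maximality gives $\sumbids{i}{\setmpp}\ge\sumbids{i}{\setopt}$; (ii) each $t\in\setopt\setminus\setmpp$ is a loser, so $b_t=v_t$; (iii) in equilibrium every winner has $p_i\le v_i$ (otherwise drop out for utility $0$), and with $\Delta:=\sumbids{j}{\setmpp}-\sumbids{j}{\alloc_2}\ge 0$, where $\alloc_2$ is the second-best feasible allocation, the $\MPP$ price is $p_i=\max\{b_i-\Delta,\ b_{i+1}\}$; (iv) $\Delta\le b_{k^\ast}$, since $\setmpp$ with its lowest winner removed is feasible. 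From (iii), $v_i\ge p_i\ge b_{i+1}$ for $i<k^\ast$, so $\eff_{\setmpp}\ge\sum_{i=2}^{k^\ast}b_i=\sumbids{i}{\setmpp}-b_1$; and from (iii)+(iv), $b_1\le v_1+\Delta\le v_1+b_{k^\ast}\le v_1+v_{k^\ast-1}$ (using $b_{k^\ast}\le p_{k^\ast-1}\le v_{k^\ast-1}$ when $k^\ast\ge 2$).

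The core step is to bound the total value of the displaced bidders $T=\setopt\setminus\setmpp$ by considering, for each $t\in T$, the unilateral deviation in which $t$ raises its bid by the least amount making some feasible allocation containing $t$ win. Since $t\in\setopt$ has enough cardinality slack, the natural target is $\setopt$ itself when $|\setopt|\le k^\ast$, or $\setmpp$ with its lowest winner swapped out for $t$ when $|\setopt|>k^\ast$; in either case $\alloc_2$ becomes a perturbation of $\setmpp$ after the deviation, $\Delta$ collapses, and the price formula in (iii) forces $v_t$ to be at most the bid-mass the surviving winners retain. Summing these inequalities over $T$, feeding in $\sumbids{i}{\setopt}\le\sumbids{i}{\setmpp}\le\eff_{\setmpp}+b_1$ together with the bound on $b_1$, and noting $\sumvals{i}{\setopt\cap\setmpp}\le\eff_{\setmpp}$, should give $\eff^\ast\le 2\,\eff_{\setmpp}$; I expect a short case split on $|\setopt|$ versus $k^\ast$ plus a (quick) degenerate check when $k^\ast=1$. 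I expect this bookkeeping to be the main obstacle: the $\MPP$ price has two components, and $\alloc_2$ may share winners with $\setmpp$ (so raising a bid need not raise $\Delta$), which is exactly why a naive ``replace each value by a bid'' estimate loses a further factor of $2$; nailing down which allocation a deviating loser actually lands in, and the price it then faces, is the delicate part.

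\textbf{Tightness.} For the matching lower bound I will use the three-bidder instance $A=(2,1)$, $B=(1,2)$, $C=(\vareps,2)$ with small $\vareps>0$. The efficient allocation is $\setopt=\{A\}$, with $\eff^\ast=2$. Let the submitted bids be $A\to(2,1)$ (a truthful loser), $B\to(2-\vareps,2)$, $C\to(1,2)$; the feasible allocations are $\{A\},\{B\},\{C\},\{B,C\}$ (the pairs containing $A$ are infeasible since $k_A=1$), with bid-sums $2,\ 2-\vareps,\ 1,\ 3-\vareps$, so $\MPP$ picks $\{B,C\}$, $\alloc_2=\{A\}$, $\Delta=1-\vareps$, and one computes $p_B=\max\{(2-\vareps)-(1-\vareps),\ 1\}=1=v_B$ and $p_C=\max\{1-(1-\vareps),\ 0\}=\vareps=v_C$ (the position term for the last winner is the reserve, $0$ here), so both winners have utility $0$. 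A direct deviation check confirms this is a Nash equilibrium: lowering $B$'s or $C$'s bid either leaves its price unchanged or drops the copy (utility $0$); raising $B$'s (resp. $C$'s) bid past $2$ makes $\{B\}$ (resp. $\{C\}$) the second-best allocation, sending $\Delta$ to $1$ and the price above the value; and $A$ can win only by outbidding the whole pair, then paying $3-\vareps>2=v_A$. Thus there is a Nash equilibrium with $\eff_{\setmpp}=1+\vareps$, giving $PoA(\MPP)\ge 2/(1+\vareps)$, which tends to $2$; the only real work is the fiddly deviation check, delicate precisely because this equilibrium is fragile, relying on $B$ and $C$ bidding over their true values.
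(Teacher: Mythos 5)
Your tightness construction is fine: it is the paper's own example rescaled (the paper uses values $(100,1),(50,2),(\epsilon,2)$ with bids $(100,1),(100,2),(50,2)$, yielding the same ratio $\tfrac{50+\epsilon}{100}$), and your deviation check goes through. The problem is the upper bound, which as written is a plan rather than a proof. The step you yourself flag as ``the delicate part'' --- bounding $\sumvals{t}{\anotb{\setopt}{\setmpp}}$ by analyzing, for each displaced bidder $t$, the cheapest unilateral bid increase that lets $t$ win, and then ``summing these inequalities over $T$'' --- is exactly the content of the theorem, and it is never carried out; the conclusion is asserted with ``should give'' and ``I expect.'' It is also not clear the route works as sketched: each such deviation is a separate counterfactual with its own winning allocation, its own second-best allocation, and its own price, and each resulting constraint $v_t\le(\cdots)$ involves essentially the whole bid mass of the current winners, so summing over $t\in T$ naively multiplies that mass by $|T|$ instead of producing a factor $2$. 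Your facts (i)--(iv) are individually correct but do not by themselves force $\eff^*\le 2\,\eff_{\setmpp}$, so the upper bound has a genuine gap.

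For contrast, the paper closes this gap without ever invoking loser deviations. Order the winners in $\anotb{\setmpp}{\setopt}$ as $b_1\ge\cdots\ge b_k$. The cardinality component of the top bidder's price gives $p(b_1)\ge \sumbids{i}{\settwompp{2}}-\sumbids{i}{\setmpp}+b_1\ge \sumvals{i}{\anotb{\setopt}{\setmpp}}-\sum_{2\le i\le k}b_i$ (using $\sumbids{i}{\settwompp{2}}\ge\sumbids{i}{\setopt}$ and loser truthfulness), while the position component gives $p(b_1)\ge b_2$ and $p(b_i)\ge b_{i+1}$ for $i\ge 2$. Summing and telescoping yields $\sum_{i\le k}p(b_i)\ge\max\{b_2,\ \sumvals{i}{\anotb{\setopt}{\setmpp}}-b_2\}\ge\tfrac12\sumvals{i}{\anotb{\setopt}{\setmpp}}$, and then rationality ($p_i\le v_i$ for every winner) converts this \emph{payment} lower bound into the \emph{value} bound $\sumvals{i}{\anotb{\setmpp}{\setopt}}\ge\tfrac12\sumvals{i}{\anotb{\setopt}{\setmpp}}$, hence $\eff_{\setmpp}\ge\tfrac12\eff^*$. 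That interplay between the two price components and the rationality constraint is the missing idea; if you want to keep your deviation-based route you would need to show it reproduces this inequality, which your current write-up does not do.
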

\begin{proof}
Let $\setmpp$ and $\setopt$ denote the set of bidders chosen by the allocation of $\MPP$ and $\VCG$ respectively. Let $\settwompp{2}$ denote set of bidders who belong to second best allocation of $\MPP$.

If $\setopt = \setmpp$,  then the efficiency is 1 and we are done. Otherwise,
\begin{align}
& & \setmpp = \sumbids{i}{\setmpp} & > \sumbids{j}{\setopt} \nonumber \\
&\implies& \sum_{i \in \anotb{\setmpp}{\setopt}} b_i + \sum_{i \in\setmpp \cap \setopt}b_i
& >  \sum_{j \in \anotb{\setopt}{\setmpp}}b_j + \sum_{i \in \setmpp \cap \setopt}b_i \nonumber \\
&\implies& \sum_{i\in\anotb{\setmpp}{\setopt}} b_i & > \sum_{j\in\anotb{\setopt}{\setmpp}} b_j \nonumber \\
&\implies& \sum_{i \in\anotb{\setmpp}{\setopt}}b_i & > \sum_{j \in\anotb{\setopt}{\setmpp}} v_j \label{eq:trans}
\end{align}
where to get~(\ref{eq:trans}) we use assumption that losing bidders bid at least their value. Remainder of the proof deviates from the conservative bidder case as we can not bound the left hand side for rational like we did with conservative bidders. 

Without loss of generality, assume that the bidders in $\anotb{\setmpp}{\setopt} = \{b_1, b_2, \ldots, b_k \}$ are ordered in non-increasing order of bids,  i.e., $b_1 \geq b_2 \geq \dots \geq b_k$. 
To lowerbound the payment of the highest bidder, we start by working on one of the components of the pricing:
\begin{eqnarray}
 \sumbids{i}{\settwompp{2}} - \sumbids{i}{\setmpp} + b_{1} &\ge& \sumbids{i}{\setopt} - \sumbids{i}{\setmpp} + b_{1} \nonumber \\
&\ge &\sumvals{i}{\anotb{\setopt}{\setmpp}}  - \sum_{2\leq i \leq k}b_i + b_{1}  \label{eq:pricelower}
\end{eqnarray}
where we get the first term of~(\ref{eq:pricelower}) from Eq.~\ref{eq:trans}. Notice, that  if highest bidder $i$ belongs to $\setopt$, then there is at least one bidder in $\setmpp$ who pays more then her value and gets negative utility. Hence, for allocation to be in Nash equilibria highest bidder $i$ must belong to  $\anotb{\setmpp}{\setopt}$, and we can exclude highest bidder from second term of~(\ref{eq:pricelower}).
Hence,
$$p(b_1) \geq \max\left\{ b_2 , \sum_{i \in \anotb{\setopt}{\setmpp}} v_i -\sum_{2\leq i \leq k}b_i\right\}$$ 
For other bidders, we bound $\MPP$ payment  by the $p(b_i) \geq b_{i+1}$. Using these, we get a lower bound on the total revenue of $\MPP$ as follows:
 \begin{eqnarray*}
 \sum_{1 \leq i \leq k} p(b_i) &=& p(b_1) + \sum_{2 \leq i \leq k} p(b_i) \\
 &\geq& \max \left\{b_2 , \sum_{i \in\anotb{\setopt}{\setmpp}} v_i - \sum_{2 \leq i \leq k} b_i \right\} +  \sum_{2 \leq i \leq k} b_{i+1} \\
 &\geq& \max \left\{b_2 , \sum_{i \in \anotb{\setopt}{\setmpp}} v_i - \sum_{2 \leq i  \leq k} b_i + \sum_{3 \leq i \leq k} b_i  \right\} \\
  &= & \max \left\{ b_2 , \sum_{i \in \anotb{\setopt}{\setmpp}}v_i - b_2 \right\} 
 \geq  \frac{1}{2} \sum_{i \in \anotb{\setopt}{\setmpp}} v_i
 \end{eqnarray*}
 Since, the bidders are rational,
 $$\sum_{i \in\anotb{\setmpp}{\setopt}} v_i \geq \sum_{i \in\anotb{\setmpp}{\setopt}}p_{i}= \sum_{1 \leq i \leq k} p(b_i)$$
 and chaining with the previous equation, we get that
$ \sum_{i \in\anotb{\setmpp}{\setopt}} v_i \geq \frac{1}{2} \sum_{v_i \in \anotb{\setopt}{\setmpp}} v_i
$.

\medskip
\noindent
{\bf Tightness.}
To see that the bound is tight consider three bidders with the following valuations: $(100,1)$, $(50,2)$, $(\epsilon,2)$ and the bids they place are $(100,1)$, $(100,2)$, $(50,2)$. Bids form NE, and none of the bidders can improve her utility acting on her own.  $\MPP$ will allocate bidders $(2, 3)$ and charge them $50$ and $0$ respectively. PoA is then $\frac{50 + \epsilon}{100} \approx \frac{1}{2}$.
 \esa{
 \hfill$\Box$
 }{}
 \end{proof}

\paragraph{Contrasts with other auctions} 
As mentioned earlier, without cardinality constraints $\MPP$ becomes $GSP$ auction without click-through rates. It is known, that in that case PoA of $GSP$ is 1. 
To highlight our result further, we consider PoA of \emph{prefix auctions}~\cite{aggarwal2007bidding} and show that it is also 1. 
\junk{
In prefix auctions $n$ bidders are bidding on the ordered set of $k$ goods. Bidder $i$ want's at most 1 item and has valuation $v_{i}$ only for goods that are located in prefix of size $k_{i}$. In ~\cite{aggarwal2007bidding} authors consider role of this externality in \emph{ad auctions}, In prefix ad auctions there are $k$ ordered ad slots, each slot $i$ has associated CTR $\ctr_{i}\in[0,1]$. Bidder $i$ submits two dimensional bid $(b_{i}, l_{i})$ and extracts utility $u_{i} = \ctr_{i}(v_{i} - p_{i})$ if $i\le k$. Authors present two different auction one based on $VCG$ and the other on $MPP$. They show that $MPP$ based auction has an equilibrium that matches allocation and prices of $VCG$ based auction. Authors do not perform price of anarchy types analysis. 

In this paper, we consider first known price of anarchy of prefix auction for identical ordered goods. 
}

The model is as follows. There are $n$ ordered identical items to sell. There are $m$ bidders,  each bidder $i$ has two private values  $v_i$ and $k_i$. Utility $u_i$ of bidder $i$ is $v_i-p_{i}$ if she obtains \emph{any} of the first $k_i$ copies, and it is $-\infty$ otherwise.  Notice that now, 
bidder $i$ has positive utility even if more than $k_i$ copies are auctioned. 
In the auction, each bidder  $i$ submits a pair $(b_{i}, l_{i})$: $b_i$ is the maximum they are willing to pay, if they are allotted one of the first  $l_i$ copies. 

We consider two different auctions: $pVCG$ and $pGSP$:\\
$\mathbf{pVCG}.$ $pVCG$  is extension of $VCG$ and  is truthful. In the auction bidders submit their true valuations $(v_{i}, k_{i})$ to auctioneer, upon recieval of bids auctioneer creates feasible allocation that maximizes total efficiency and calculates payments using Eq.~\ref{eq:vcgpayment}.

\medskip
\noindent
$\mathbf{pGSP}.$ $pGSP$ is iterative second price (SP) auction from first copy of the item to the last, where for each item, we run a SP auction among bidders who are not yet assigned a copy but who still have nonnegative utility from obtaining one. 

Notice, that bidder cannot benefit by submitting bid $b_{i}>v_{i}$, hence it is weakly dominant strategy for bidders to bid {\em conservatively}. Similarly to $\MPP$ it is dominant strategy for the bidder to report her preference $k_{i}$ truthfully. The argument is identical to Lemma~\ref{lemma:truth}.
 
Unlike second price auction $pGSP$ is not truthful. 
%It can be observed from the fact, that in identical items auction bidders are interested minimizing their payment. 
Thus, similarly to $\MPP$, we analyze bid vectors that are in Nash equilibria. For $pGSP$, Nash equilibrium is defined as follows. For each $i$, 
\begin{eqnarray*}
v_{i} - p_{i} \ge v_{i} - p_{j} \ \forall j \ne i \text { and } j \le \min\{k, k_{i}\}
\end{eqnarray*}

Here we give PoA results for prefix auctions without click-through rates.
%%%%%%%%%
% Theorem
%%%%%%%%%
%\subsection{Efficiency and Revenue}

%\paragraph{Analysis of pGSP} 

\begin{theorem}
\label{t:eff}
$PoA$ of $pGSP$ is $1$.
\end{theorem}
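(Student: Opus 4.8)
The plan is to show that every Nash equilibrium of $pGSP$ is efficient: by the weak-dominance observations above we may assume bidders are conservative, report $l_i=k_i$, and that losing bidders bid their true value, and we must show that the set $\setmpp$ of $pGSP$ winners satisfies $\sumvals{i}{\setmpp}=\sumvals{i}{\setopt}$, where $\setopt$ is the $pVCG$ (hence maximum-value) allocation; since $\sumvals{i}{\setmpp}\le\sumvals{i}{\setopt}$ is automatic, it suffices to prove the reverse.

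First I would record the anatomy of a $pGSP$ run. It fills positions $1,2,\dots,K$ for some $K$ (once a copy attracts no eligible bidder, neither do all later copies); write $w_c$ for the winner and $p_c$ for the price of copy $c$. Since the eligible set strictly shrinks from step $c$ to step $c+1$ (we drop $w_c$ and every bidder with $l_i=c$), one obtains the interleaving $b_{w_1}\ge p_1\ge b_{w_2}\ge p_2\ge\cdots\ge b_{w_K}\ge p_K\ge 0$; in particular $b_{w_c}\ge p_{c'}$ whenever $c\le c'$. Conservativeness gives $v_{w_c}\ge b_{w_c}\ge p_c$ for every winner. For a loser $j$: since $j$ bids $v_j$ and is eligible at every step $c\le\min(k_j,K)$ but never wins, $v_j\le p_c$ there; and $k_j>K$ is impossible, since then $j$ would be eligible (hence win) at step $K+1$. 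So every loser satisfies $k_j\le K$ and $v_j\le p_{k_j}$.

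Next I would invoke the equilibrium condition exactly once, to forbid $pGSP$ from ``leaving copies on the table'', i.e.\ to show $|\setmpp|=K\ge|\setopt|$. If some feasible allocation used more than $K$ bidders, a bidder $i^*$ at a position $>K$ in it would have $k_{i^*}>K$, hence (losers have $k\le K$) $i^*\in\setmpp$; but then $i^*$ could lower its bid to $\varepsilon$, let all higher bidders be placed as usual, and pick up, at price essentially $0$, the first still-empty copy, which sits at position $K'+1$ where $K'\le K<k_{i^*}$ is the number of copies the remaining bidders fill among themselves — a profitable deviation (assuming, for the degenerate case, $p_{c}>0$ at $i^*$'s original copy, which I would handle separately). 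Making the cascade tracking of this withdrawal rigorous is one of the delicate points.

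Finally, the efficiency comparison. Choose a maximum-value feasible $\setopt$ maximizing $|\setopt\cap\setmpp|$ and place it at positions $1,\dots,|\setopt|$. For a loser $j\in\anotb{\setopt}{\setmpp}$ at position $c$ in $\setopt$ we have $v_j\le p_{k_j}\le p_c\le v_{w_c}$ and $k_{w_c}\ge c$, so if $w_c\notin\setopt$ then swapping $j$ for $w_c$ keeps $\setopt$ feasible and maximum-value while increasing the overlap, a contradiction; hence $w_c\in\setopt$. This gives an injection $\anotb{\setopt}{\setmpp}\to\setopt\cap\setmpp$, $j\mapsto w_c$, whose image's $\setmpp$-positions form a set $B\subseteq\{1,\dots,K\}$ disjoint from the set $A$ of $\setmpp$-positions of $\anotb{\setmpp}{\setopt}$, with $|A|\ge|B|$ (using $K\ge|\setopt|$). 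Then $\sumvals{i}{\setmpp}-\sumvals{i}{\setopt}=\sumvals{i}{\anotb{\setmpp}{\setopt}}-\sumvals{j}{\anotb{\setopt}{\setmpp}}\ge\sum_{c\in A}b_{w_c}-\sum_{c'\in B}p_{c'}$, so it remains to show $\sum_{c\in A}b_{w_c}\ge\sum_{c'\in B}p_{c'}$, which follows from $b_{w_c}\ge p_{c'}$ ($c\le c'$) once one exhibits an injection $B\to A$ carrying each $c'$ to some $c\le c'$. Proving that such an injection exists — equivalently that $|A\cap[1,x]|\ge|B\cap[1,x]|$ for all $x$, which is where the position assignment of $\setopt$ must be chosen carefully so that the common bidders are at least as front-loaded in $\setopt$ as in $\setmpp$ — is the main obstacle; granting it, $\sumvals{i}{\setmpp}=\sumvals{i}{\setopt}$ and hence $PoA(pGSP)=1$.
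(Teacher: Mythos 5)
Your overall architecture (price monotonicity $b_{w_1}\ge p_1\ge b_{w_2}\ge\cdots$, loser bounds $v_j\le p_c$, the exchange argument forcing $w_c\in\setopt$ at loser positions, and the single equilibrium deviation establishing $K\ge|\setopt|$) is sound as far as it goes, and it is genuinely different from the paper's argument, which instead locates a high-value bidder in $\anotb{\setopt}{\setpgsp}$ who could profitably raise her bid and enter. But the step you flag as ``the main obstacle'' --- the existence of an injection $B\to A$ carrying each $c'$ to some $c\le c'$ --- is not merely delicate; it is false for bid profiles that survive every other condition you impose, so the proof does not close. Concretely, take four bidders with values and cardinalities $(10,1)$, $(100,3)$, $(9,3)$, $(8,3)$, three items, losers truthful, and let the $(100,3)$ bidder bid anything above $10$ while the others bid their values. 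Then $pGSP$ allocates $\{(100,3),(9,3),(8,3)\}$ at prices $10,8,0$, so $K=3=|\setopt|$ (your Part~2 condition holds and its deviation is unavailable), $\setopt=\{(10,1),(100,3),(9,3)\}$ with the $(10,1)$ bidder forced to position $1$, the exchange argument correctly places $w_1=(100,3)$ in $\setopt\cap\setpgsp$, and you get $B=\{1\}$, $A=\{3\}$: no injection with $\phi(1)\le 1$ exists, no repositioning of $\setopt$ can front-load the common bidders past a $k=1$ loser, and indeed $\sum_{\setpgsp}v=117<119=\sum_{\setopt}v$. This configuration is \emph{not} a Nash equilibrium, but the only deviation that kills it is one your proof never invokes: the winner at position $1$, having $k\ge 2$, can underbid the $(10,1)$ loser and take position $2$ at price $9<10$.

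So the missing ingredient is a second family of equilibrium constraints --- roughly, that a winner $w_c$ with $k_{w_c}>c$ cannot strictly prefer sliding down to position $c+1$, which in equilibrium pins $p_c$ to the price available one slot lower and, pushed through the cascade, is exactly what lets a blocked high-value loser's slot be vacated. (This is also, implicitly, what the paper's ``a losing bidder can raise her bid and enter'' step is trading on, although as you can check in the example above the loser herself cannot profitably enter; it is the winner's exit that must be argued.) Your Part~2 claim $K'\le K$ is a genuine but fixable sub-gap: it follows by induction on position that every original winner other than the withdrawing one still wins, at a weakly earlier position, after the withdrawal. The Hall-condition gap, by contrast, cannot be repaired by a cleverer position assignment of $\setopt$; it requires importing the additional incentive constraints, so as written the proof is incomplete in an essential way.
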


\begin{proof}
Let $\setopt$ be set of bidders allocated by $pVCG$ and $\setpgsp$ be set of bidders allocated by $pGSP$ in Nash equilibria. 
%For contradiction, assume  $\setopt \ne \setpgsp$, then using the fact that $\alloc^{*}$ is allocation with the largest efficiency $
%\eff^{*}$
Then, similarly to Eq.~\ref{eq:trans} we have
\begin{eqnarray*}
\sum_{i \in \anotb{\setopt}{\setpgsp}} v_{i} &\ge& \sum_{j \in \anotb{\setpgsp}{\setopt}} v_{j} 
%\label{eq:diff}
\end{eqnarray*}

It is possible only if
\begin{eqnarray}
\exists i\in \anotb{\setopt}{\setpgsp} \ \forall j \in \anotb{\setpgsp}{\setopt} \mbox{ s.t. } v_{i} > v_{j} \label{eq:cond}
\end{eqnarray}

Assume, it is not the case. Then,
%\begin{eqnarray*}
$\exists j\in  \anotb{\setpgsp}{\setopt} \ \forall i \in \anotb{\setopt}{\setpgsp} \mbox{ s.t. } v_{j} > v_{i}$.
%\end{eqnarray*}
However, this is possible only if bidder $v_{j}$ cannot replace any of bidders $i\in \anotb{\setopt}{\setpgsp}$, otherwise $v_{j}$ would improve $\eff^{*}$. This, in turn, is possible if and only if $|\{i | i\in \anotb{\setopt}{\setpgsp}\}| = 0$ that would imply that $|\setopt| < |\setpgsp|$. However, it gives a contradiction, as efficiency of $\eff^{*}$ could be improved by adding $v_{j}$ to it.

If~(\ref{eq:cond}) is true, then there must be a  losing bidder $l$ who can raise her bid, enter the allocation and as the result improve her utility. That gives a contradiction. Thus, total values of  $\setopt $ and $\setpgsp$ are identical.
\esa{
 \hfill$\Box$
 }{}
% the statement of the theorem as allocation is not in Nash equilibrium.
%Thus $\setopt = \setpgsp$, hence PoA = 1.
\end{proof}
\section{Revenue Analyses}
\junk{
Auction is conducted by the auctioneer, sure gain of the auctioneer is revenue. VCG auctions are considered to collect low revenue, MPP auctions by construction aim to improve this characteristic. Let's consider revenue of $\MPP$ with respect to $\VCG$ to see whether $\MPP$ improves on the revenue of $\VCG$ as expected. 
}

Let $Rev(X)$ be revenue generated by mechanism $X$. We show two results. 

\begin{theorem}
With conservative bidders, $Rev(\MPP) \ge Rev(\VCG)$.
\label{teorem:conservativerevenue}
\end{theorem}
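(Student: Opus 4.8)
The plan is to prove the stronger, bidder-by-bidder inequality $p^{VCG}_i \le p^{MPP}_i$ for every bidder $i$ in the $\MPP$ winning set of the given equilibrium, and then sum over winners. To make this a comparison over the same set of bidders, I would first invoke Theorem~\ref{kstarlemma}: with conservative bidders in equilibrium the $\MPP$ winning set $\setmpp$ is value-maximal, hence $\setmpp$ is itself an efficient allocation and we may take it to be the one $\VCG$ selects, so that $Rev(\VCG)=\sum_{i\in\setmpp}p^{VCG}_i$ and $Rev(\MPP)=\sum_{i\in\setmpp}p^{MPP}_i$.

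The $\MPP$ side is immediate from the pricing rule: $p^{MPP}_i=\max\{\sumbids{j}{\settwompp{2}}-\sumbids{j}{\setmpp}+b_i,\ b_{i+1}\}\ge \sumbids{j}{\settwompp{2}}-\sumbids{j}{\setmpp}+b_i$, where $b_i$ is $i$'s own bid. So it is enough to show, for every winner $i$,
\[
p^{VCG}_i\ \le\ \sumbids{j}{\settwompp{2}}-\sumbids{j}{\setmpp}+b_i .
\]

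To bound the left side, fix $i\in\setmpp$ and let $\tau$ be a feasible allocation not containing $i$ that attains $\eff_{-i}$; such $\tau$ exists because, e.g., $\anotb{\setmpp}{\{i\}}$ is feasible, so $\eff_{-i}$ is well defined and $\eff_{-i}=\sumvals{j}{\tau}$. Since $\tau\ne\setmpp$ (it omits $i$) while $\setmpp$ and $\settwompp{2}$ are the top two allocations by total bid, $\sumbids{j}{\tau}\le\sumbids{j}{\settwompp{2}}$. Now write $v_j=b_j+\delta_j$ with $\delta_j\ge 0$ (conservative bidders) and $\delta_j=0$ for $j\notin\setmpp$ (losers bid their value). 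Then $\eff_{-i}=\sumbids{j}{\tau}+\sum_{j\in\tau\cap\setmpp}\delta_j\le\sumbids{j}{\settwompp{2}}+\sum_{j\in\anotb{\setmpp}{\{i\}}}\delta_j$, using $\tau\cap\setmpp\subseteq\anotb{\setmpp}{\{i\}}$ and $\delta_j\ge0$, while $\eff^{*}=\sumvals{j}{\setmpp}=\sumbids{j}{\setmpp}+\sum_{j\in\setmpp}\delta_j$. Substituting these into $p^{VCG}_i=\eff_{-i}-\eff^{*}+v_i$ and using $v_i=b_i+\delta_i$, the shading terms cancel exactly because $\sum_{j\in\anotb{\setmpp}{\{i\}}}\delta_j-\sum_{j\in\setmpp}\delta_j+\delta_i=0$, leaving precisely the displayed bound. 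Summing over $i\in\setmpp$ gives $Rev(\VCG)\le Rev(\MPP)$.

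The crux — and essentially the only place the hypotheses get used — is this exact cancellation of the ``conservative shading'' terms $\delta_j$: it relies on Theorem~\ref{kstarlemma} to ensure $\eff^{*}=\sumvals{j}{\setmpp}$ (value-maximality of $\setmpp$), and on the standing convention that losing bidders bid their value, which makes the $\delta_j$ supported on $\setmpp$. The remaining ingredients — existence of an optimal $\tau$ avoiding $i$, the harmless possibility that $\tau\cap\setmpp=\emptyset$, and $\sumbids{j}{\tau}\le\sumbids{j}{\settwompp{2}}$ — are routine.
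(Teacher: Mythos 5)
Your proof is correct and follows essentially the same route as the paper: a winner-by-winner comparison of $p^{VCG}_i$ with the cardinality component $\sumbids{j}{\settwompp{2}}-\sumbids{j}{\setmpp}+b_i$ of the $\MPP$ price, using conservativeness of winners and truthful bidding by losers (your explicit $\delta_j$ cancellation makes rigorous the inequality the paper asserts in one line). The only organizational difference is that you use Theorem~\ref{kstarlemma} to take the $\VCG$ allocation to be $\setmpp$ itself, whereas the paper compares against a possibly different optimal $\setopt$ and handles bidders in the symmetric difference by a separate (tie) case analysis; your reduction is legitimate since value-tied optimal allocations yield the same $\VCG$ revenue.
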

\begin{proof}
Let $\setopt$ be the allocation with maximum total value (hence, the value attained by $\VCG$),  $\setmpp$ be the allocation of
$\MPP$ in equilibrium,  $\settwompp{2}$ be the set of bidders who participate in second best allocation of $\MPP$ and 
$\settwoopt{-i}$ the set of bidders that gives the largest total value allocation when bidder $i$ is not present.

Consider payments of each bidder $i \in \setopt \cap \setmpp$ under $\VCG$ and $\MPP$:
\begin{eqnarray*}
p^{VCG}_{i} = \sum_{j \in \settwoopt{-i}} v_j - \sum_{j\in\setopt} v_j + v_i
\end{eqnarray*}
%and 
\begin{eqnarray*}
p^{MPP}_{i} &=& \max \{ b_{i+1}, \sum_{j\in\settwompp{2}}b_j - \sum_{j \in \setmpp} b_j + b_i \} 
\ge \sum_{j\in\settwompp{2}}b_j - \sum_{j \in \setmpp} b_j + b_i 
\end{eqnarray*}
%
%Let $\settwompp{2}$ and $\settwoopt{-i}$ set of bidders that give second largest efficiency allocation and largest efficiency %allocation when bidder $i$ is not included respectively.
Since bidders are conservative and losers bid their values,
\begin{eqnarray*}
p^{VCG}_{i}
    &=& \sum_{j \in \settwoopt{-i}} v_j - \sum_{j \in \setopt } v_j + v_i  \le \sum_{j\in \settwompp{2}}v_{j} - \sum_{j\in \setmpp}b_{j} +b_{i}\\
    &=& \sum_{j \in \anotb{\settwompp{2}}{\setmpp}} b_{j} + \sum_{j \in \settwompp{2} \cap \setmpp} b_{j} 
    - \sum_{j \in \setmpp \cap \settwompp{2}} b_{j}  - \sum_{j \in \anotb{\setmpp}{\settwompp{2}}} b_{j} +b_{i}
     = p^{MPP}_{i}
    \end{eqnarray*}
    
Now, consider payments of all such bidders  $i\in\anotb{\setopt}{\setmpp}$ or $i\in\anotb{\setmpp}{\setopt}$. This is possible, when there are 2 allocations of different size that have equally high efficiency, lets denote them by $\alloc_{\setopt}$ and $\alloc_{\setmpp}$. If bidder $i$ is present in only one of allocations, then her payment is $p_{i}^{MPP} = p_{i}^{VCG} = v_{i}$. Payment of $\VCG$ follows from definition.
 Observe, that bidder $i$ submits truthful bid in $\MPP$, because otherwise she will be not be in winning configuration. Now, one can derive the payment from definition.
\esa{
\hfill$\Box$
}{}
\end{proof}

With rational bidders,  we show that revenue of $\MPP$ can be as low as half of that of $\VCG$.

\medskip
%\begin{example}
\noindent 
\begin{example}
Consider 3 bidders with the following valuations $A=(100+\epsilon,1)$, $B=(50,2)$ and $C=(50, 2)$. Rational bidders can converge to bids $A=(100,1)$, $B=(100,2)$ and $C=(50,2)$ respectively. $\VCG$ gets truthful bids and chooses allocation consisting of bidder $A$ and her payment is 100, while $\MPP$ chooses allocation with bidders ($B$ and $C$) and prices them 50 and 0 respectively, achieving exactly half of revenue of $\VCG$.
\hfill $\Box$
\end{example}

\medskip
As in case with efficiency, revenue of cardinal auctions is also surprising in contrast with other auctions.  It is believed, that one of the reasons to use MPP auctions is to improve revenue, e.g., revenue of $GSP$ without click-through rates is always at least as much as that of $VCG$. This is also true for modification presented in~\cite{ghosh2010expressive}. 
%$\MPP$ generates more revenue than $\VCG$. 
Likewise, for prefix auctions, this continues to hold. 

\begin{theorem}
In equilibrium, $Rev(pVCG) \le Rev(pGSP)$. 
\end{theorem}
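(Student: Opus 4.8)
The plan is to mirror the proof of Theorem~\ref{teorem:conservativerevenue}, exploiting that in $pGSP$ it is weakly dominant to bid conservatively and that losing bidders bid their values, so the bidding behaviour is exactly the conservative one. Let $\setopt$ be the allocation chosen by $pVCG$ (the maximum-total-value feasible allocation) and $\setpgsp$ the allocation produced by $pGSP$ in the given Nash equilibrium; by Theorem~\ref{t:eff} they have equal total value, and I would first reduce to $\setopt=\setpgsp$ exactly as in Theorem~\ref{teorem:conservativerevenue} (the only remaining case being two equal-value allocations of different sizes, in which every bidder in the symmetric difference is charged its true valuation under both auctions). It then suffices to prove the per-bidder inequality $p^{pGSP}_i \ge p^{pVCG}_i$ for every winner $i$ and add up.

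For the $pVCG$ side I would use that the feasible prefix allocations form a transversal matroid---bidder $i$ may be matched to any of positions $1,\dots,k_i$, so the efficient allocation is a maximum-weight basis---and hence the VCG payment has the clean form $p^{pVCG}_i = v_{r(i)}$, where $r(i)$ is the highest-value bidder outside $\setopt$ for which $(\anotb{\setopt}{\{i\}})\cup\{r(i)\}$ is again feasible (with $p^{pVCG}_i=0$ when no such bidder exists, in which case the inequality is immediate). For the $pGSP$ side, $i$ wins some position $j_i$ in the iterative second-price run and pays the second-highest bid among the bidders eligible at step $j_i$. Since $r(i)\notin\setpgsp$ it is never assigned, hence still unassigned at step $j_i$; the crux is to show it is \emph{eligible} at that step, i.e. $k_{r(i)}\ge j_i$. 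Granting this, $r(i)$ is an eligible non-winner at step $j_i$, so $p^{pGSP}_i$ is at least $b_{r(i)} = v_{r(i)} = p^{pVCG}_i$ (using that the loser $r(i)$ bids truthfully), and summing over $i\in\setopt=\setpgsp$ finishes the argument.

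The main obstacle is exactly this eligibility claim $k_{r(i)}\ge j_i$: feasibility of $(\anotb{\setopt}{\{i\}})\cup\{r(i)\}$ only says that $r(i)$ fits into \emph{some} position of an allocation of that size, not necessarily into $i$'s position in the run. To close the gap I would establish a structural lemma about Nash equilibria of $pGSP$: in equilibrium the iterative auction assigns winners to positions essentially in increasing order of their cardinality constraints (informally, a bidder with a slack constraint occupying an early slot that a tighter-constrained bidder needs would strictly benefit by shading below that bidder so as to win a later slot at a lower price---the ``shade-to-lose-early, win-late'' phenomenon---and a loser who could profitably replace a winner, and whose constraint permits the swap, would deviate). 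This lets me assume the run's assignment is ``leftist'', so that any bidder able to substitute for $i$ in $\anotb{\setopt}{\{i\}}$ has constraint at least $j_i$; combined with a Hall-type matching argument this yields the claim. An alternative, global route is to lower-bound $Rev(pGSP)$ and upper-bound $Rev(pVCG)$ using the fact that the sequence of second-highest eligible bids is non-increasing along positions, together with the total-value equality of Theorem~\ref{t:eff}, in the spirit of the symmetric-difference manipulations in the efficiency proofs; I would carry through whichever turns out cleaner.
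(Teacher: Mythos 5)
Your proposal is essentially the paper's own argument: the paper likewise proves the per-winner inequality by identifying the replacement bidder $l$ (your $r(i)$) who would enter the allocation if $i$ left, observing $p_i^{pVCG}=v_l$, and arguing $p_i^{pGSP}\ge b_l=v_l$ because $l$ is a conservative losing bidder bidding her value. The eligibility subtlety you flag (that $r(i)$ must still be an admissible candidate at the step where $i$ wins, i.e.\ $k_{r(i)}\ge j_i$) is a real point, but the paper's one-line proof simply writes $p_i^{pGSP}=\max\{b_{i+1},b_l\}$ without addressing it, so your treatment is, if anything, more careful than the published one.
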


\begin{proof}
Let $\alloc$ be the allocation of $pGSP$ (or $pVCG$). Consider payment of bidder $i\in \alloc$. Let $l$ be the bidder who enters allocation $\alloc$ if $i$ leaves it. If no such bidder exist, let $l$ be a bidder with valuation $v_{l} = 0$ and $k_{l} = n$.
Then payment of bidder $i$ in $pVCG$ is 
$p_{i}^{pVCG} = \eff_{-i} + \eff + v_{i} = v_{l}$
and ayment of bidder $i$ in $pGSP$ allocation
$p_{i}^{pGSP} = \max\{b_{i+1}, b_{l}\}$.

Payment is minimized when $p_{i}^{pGSP} = b_{l}$. $b_{l} = v_{l}$, since bidders are conservative, and $l$ is loosing bidder. Hence, $Rev(pVCG) \le Rev(pGSP)$. 
\esa{
\hfill$\Box$
}{}
\end{proof}

In contrast to $pVCG$ and $VCG$ that have lower revenue than the corresponding versions of MPP, for cardinal auctions, we have shown that in some cases $\VCG$ may have more revenue than $\MPP$. This is because cardinal constraints enable richer strategies for bidders in particular strategy of rational bidders who can bid above their value.

\section{Concluding Remarks and Future Directions}
We consider the problem of selling identical copies of an item via an auction in which the number of copies sold is unknown \emph{a priori}, and valuation of a bidder depends on the total number of winners. This scenario is motivated by  number of ads on a page or number of parties that get access to certain information. 
While there are many ways to solve this problem, we consider \emph{cardinal auctions} in which the bidding language 
lets buyers explicitly bid on the maximum number of winners allowed. Our work analyzes cardinal auctions of $\MPP$ and $\VCG$ for revenue and efficiency tradeoffs in equilibria, and shows that they are quite different from the case without the cardinal externality. We find that $\MPP$ which is inspired by widely used Generalized Second Price (GSP) auction has surprising properties. In case of rational bidders efficiency of $\MPP$ is half of that of $\VCG$. At the same time, in the worst case $\MPP$ can collect only half of revenue of $\VCG$. 

There are many open directions to pursue. For example, in display ads, slots may differ 
in terms of their location and dimensions, as well as  click through rates. We need to extend the study of cardinal auctions to auctions for configurations of display ads with varying quality scores or with varying click through models.

Externality is a richer phenomenon than we have studied here. For instance, the value for a bidder
might depend not only on the number of other possessors, but also on their identity, quality, etc. 
Further, one can consider bidding languages which go beyond the step function we have adopted here,
for example, by letting bidders specify their value for each potential number of winners. 
Studying such notions of externalities and bidding languages is an active area in Economics and problems are still open. 

From a technical point of view, we would like to extend our analysis to Bayesian case and 
study dynamics of cardinal auctions. 

\bibliography{algorithmica}

\newcommand{\etalchar}[1]{$^{#1}$}
\begin{thebibliography}{GHK{\etalchar{+}}06}

\bibitem[AC08]{aseff2008optimal}
J.~Aseff and H.~Chade.
\newblock An optimal auction with identity-dependent externalities.
\newblock {\em The RAND Journal of Economics}, 39(3):731--746, 2008.

\bibitem[AFM07]{aggarwal2007bidding}
G.~Aggarwal, J.~Feldman, and S.~Muthukrishnan.
\newblock Bidding to the top: Vcg and equilibria of position-based auctions.
\newblock {\em Approximation and Online Algorithms}, pages 15--28, 2007.

\bibitem[AGM06]{aggarwal2006truthful}
G.~Aggarwal, A.~Goel, and R.~Motwani.
\newblock {Truthful auctions for pricing search keywords}.
\newblock In {\em Proceedings of the 7th ACM conference on Electronic
  commerce}, pages 1--7. ACM, 2006.

\bibitem[BCF{\etalchar{+}}08]{broder2008swing}
A.~Broder, M.~Ciaramita, M.~Fontoura, E.~Gabrilovich, V.~Josifovski,
  D.~Metzler, V.~Murdock, and V.~Plachouras.
\newblock {To swing or not to swing: Learning when (not) to advertise}.
\newblock In {\em Proceeding of the 17th ACM conference on Information and
  knowledge management}, pages 1003--1012. ACM, 2008.

\bibitem[Cla73]{G}
E.~Clarke.
\newblock {Multipart pricing of public goods}.
\newblock In {\em Public Choice}, page 41:617Ð631, 1973.

\bibitem[EOS05]{edelman2005internet}
B.~Edelman, M.~Ostrovsky, and M.~Schwarz.
\newblock {Internet advertising and the generalized second price auction:
  Selling billions of dollars worth of keywords}, 2005.

\bibitem[GH03]{goldberg2003envy}
A.V. Goldberg and J.D. Hartline.
\newblock Envy-free auctions for digital goods.
\newblock In {\em Proceedings of the 4th ACM conference on Electronic
  commerce}, pages 29--35. ACM, 2003.

\bibitem[GHK{\etalchar{+}}06]{goldberg2006comp}
A.V. Goldberg, J.D. Hartline, A.R. Karlin, M.~Saks, and A.~Wright.
\newblock Competitive auctions.
\newblock {\em Games and Economic Behavior}, 55(2):242--269, 2006.

\bibitem[Gro71]{V}
T.~Groves.
\newblock {Incentives in teams}.
\newblock In {\em Econometrica,}, pages 11:17--33, 1971.

\bibitem[GS10]{ghosh2010expressive}
A.~Ghosh and A.~Sayedi.
\newblock {Expressive auctions for externalities in online advertising}.
\newblock In {\em Proceedings of the 19th international conference on World
  wide web}, pages 371--380. ACM, 2010.

\bibitem[HIMM11]{haghpanah2011optimal}
N.~Haghpanah, N.~Immorlica, VS~Mirrokni, and K.~Munagala.
\newblock Optimal auctions with positive network externalities.
\newblock In {\em Proc. of the 12th ACM Conference on Electronic Commerce
  (ECÕ11)}, pages 11--20, 2011.

\bibitem[JMS96]{jehiel1996not}
P.~Jehiel, B.~Moldovanu, and E.~Stacchetti.
\newblock How (not) to sell nuclear weapons.
\newblock {\em The American Economic Review}, pages 814--829, 1996.

\bibitem[JS11]{jerath2011exclusive}
K.~Jerath and A.~Sayedi.
\newblock Exclusive display in sponsored search advertising.
\newblock 2011.

\bibitem[LPL11]{lucier2011gsp}
B.~Lucier and R.~Paes~Leme.
\newblock Gsp auctions with correlated types.
\newblock In {\em Proceedings of the 12th ACM conference on Electronic
  commerce}, pages 71--80. ACM, 2011.

\bibitem[LT10]{leme2010pure}
R.P. Leme and E.~Tardos.
\newblock {Pure and Bayes-Nash price of anarchy for generalized second price
  auction}.
\newblock In {\em 2010 IEEE 51st Annual Symposium on Foundations of Computer
  Science}, pages 735--744. IEEE, 2010.

\bibitem[Mut09]{muthukrishnan2009bidding}
S.~Muthukrishnan.
\newblock Bidding on configurations in internet ad auctions.
\newblock {\em Computing and Combinatorics}, pages 1--6, 2009.

\bibitem[PKK11]{pei2011working}
J.~Pei, D.~Klabjan, and F.~Karaesmen.
\newblock {Auction Analysis for Sharable Goods with Externality}.
\newblock {\em Working Paper}, 2011.

\bibitem[SKN{\etalchar{+}}10]{schroedl2010generalized}
S.~Schroedl, A.~Kesari, A.~Nair, L.~Neumeyer, and S.~Rao.
\newblock {Generalized Utility in Web Search Advertising}.
\newblock 2010.

\bibitem[SS71]{shapley1971assignment}
L.S. Shapley and M.~Shubik.
\newblock The assignment game i: The core.
\newblock {\em International Journal of Game Theory}, 1(1):111--130, 1971.

\bibitem[Vic61]{C}
W.~Vickrey.
\newblock {Counterspeculation, auctions and competitive sealed tenders}.
\newblock In {\em Journal of Finance}, page 16:8Ð37, 1961.

\end{thebibliography}
\bibliographystyle{alpha}

 \end{document}